\newtheorem{theorem}{Theorem}
\begin{document}
\title{ Cylinder quantum field theories at small coupling}

\author[a]{Andrei Ioan Dogaru,}
\author[b]{Ruben Campos Delgado}

\affiliation[a]{Faculty of Physics, University of Bucharest, \\Strada Atomiștilor 405, 077125 Magurele, Romania}
\affiliation[b]{Bethe Center for Theoretical Physics and Physikalisches Institut der Universit\"at Bonn,\\
Nussallee 12, 53115 Bonn, Germany}

\emailAdd{andreidogaruioan95@gmail.com}

\emailAdd{ruben.camposdelgado@gmail.com}

\abstract{
We show that any 2D scalar field theory compactified on a cylinder and with a Fourier expandable potential $V$ is equivalent, in the small coupling limit, to a 1D theory involving a massless particle in a potential $V$ and an infinite tower of free massive Kaluza-Klein (KK) modes.  Moving slightly away from the deep IR region has the effect of
switching on interactions between the zero mode and the KK modes, whose strength is controlled by powers of the coupling, hence making the interactions increasingly suppressed. We take the notable example of Liouville field theory and, starting from its worldline version, we compute the torus (one-loop) partition function perturbatively in the coupling constant. The partition function at leading order is invariant under a T-duality transformation that maps the radius of the cylinder to its inverse and rescales it by the square of the Schwinger parameter of the cylinder. We show that this behavior is a universal feature of cylinder QFTs.}

\keywords{Worldline formalism; Kaluza-Klein compactification; Liouville field theory}

\maketitle
\section{Introduction}
The worldline formulation of quantum field theory (QFT) is an important tool for organising perturbation theory \cite{Strassler:1992zr}, as well as for describing non-perturbative effects \cite{Antonov:2016rzr}. In this framework, the usual Feynman graphs of a QFT are reinterpreted as terms in the path integral of a quantum mechanics describing the propagation of corresponding particles. This approach allows one to regard a $(d+1)-$dimensional QFT as an ordinary, albeit complicated, quantum mechanical system: a $(0+1)-$dimensional QFT. For a review about the worldline formalism, see for example \cite{Schubert:2001he}. Another powerful idea in QFT is to relate different theories via Kaluza-Klein (KK) compactification \cite{kkreview}. This paper is an attempt to utilise both techniques, by linking them against several $(1+1)-$dimensional QFTs. 
The first point of convergence of these two ideas is to be found in string theory, since the latter is both a non-local generalisation of a worldline theory and a natural client of the compactification technology. That said, one usually encounters \textit{spacetime} compactifications in string theory, which underlie most cases of duality and are considered a key ingredient in obtaining realistic models of high-energy physics \cite{Grana:2005jc}. However, in \cite{Abel:2020gdi} string theoretic features such as non-locality and modular invariance are shown to emerge from a worldline action with an infinite tower of massive KK modes. This action is obtained directly from the Polyakov action by means of a \textit{worldsheet} compactification and is used to compute the one-loop partition function. 

Inspired by these results, we study the cylinder compactification of 2D field theories at small coupling, turning them into worldline theories. The main motivation for this work is to create a dictionary that allows one to view a given theory by two different perspectives. One can deal with a 2D field theory or, alternatively, one can translate it into a 1D theory of particles. In the paper we argue that any two-dimensional field theory defined on a cylinder $\Sigma$, equipped with a coupling constant $\epsilon$ and with a Lagrangian description of the form
\begin{equation}
S=\frac{1}{4\pi}\int_{\Sigma}d^2\sigma \sqrt{g}\,\left(g^{ab}\partial_a X\partial_b X+4\pi V_\epsilon(X)\right)
\end{equation}
is equivalent, in the $\epsilon\to 0$ limit, to a one-dimensional theory involving a massless particle $X_0$ satisfying the Schr\"odinger equation in a potential $V_\epsilon$ plus an infinite tower of free massive particles $X_n$, the Kaluza-Klein modes. Moving slightly away from the deep IR region has the effect of switching on interactions between $X_0$ and $X_n$, whose strength is controlled by powers of $\epsilon$. As a consequence of that, they are increasingly suppressed. 
We illustrate our procedure by choosing a 2D field theory whose properties are well known, namely Liouville field theory, which is characterised by an exponential potential. 
We put the theory on a cylinder by making the spatial coordinate of the worldsheet periodic. A Laurent-expansion method is then deployed to deal with the exponential term.
We advocate the equivalence of the resulting one-dimensional theory with the parent two-dimensional theory by computing the one-loop partition function with an additionally compactified target space (double compactification). Our result recovers the one obtained in the literature using matrix models techniques \cite{Gross:1990ub, Nakayama:2004vk}. The partition function is computed on a torus resulting from deforming the cylinder (hence it is considered to be "one-loop") and is obtained by integrating over all propagators for $X_0$ and $X_n$. At leading order, it presents a specific case of symmetry transformation, or T-duality, which maps the radius of the cylinder to its inverse and rescales it by some constant, the square of the Schwinger parameter of the cylinder. With the partition function at our disposal, we also compute general one-loop scattering amplitudes by integrating it with suitable worldline vertex operators. 

The whole procedure can be generalised to other potentials, as soon as their Fourier transform exists. We discuss this by taking as a further example a field theory with a one-soliton, or P\"oschl-Teller, potential. We compute the one-loop partition function for this new theory and find that it has the same symmetry as the Liouville one. In fact, this is a universal feature. We show at the end of the paper that, given any 2D scalar field theory with a Fourier-expandable potential and compactified on a cylinder, the one-loop partition function obtained from the corresponding worldline theory presents at leading order in the coupling the same T-duality. 

The paper is organised as follows. In Section \ref{sec:section1}, we compactify Liouville field theory on a cylinder and compute the one-loop partition function perturbatively in the low coupling regime, within the worldline framework. We then provide examples of scattering amplitudes that can be computed from the partition function.  In Section \ref{sec:section2}, the same method is applied to another field theory, the two-dimensional generalisation of quantum mechanics with a one-soliton potential. We find that the new partion function has the same symmetry as the Liouville one. In Section \ref{sec:section3} we argue that this is a universal feature independent of the potential. We then end with a summary and outlook.  
\section{A worldline perspective on Liouville theory}\label{sec:section1}
In this section we calculate the one-loop partition function for Liouville field theory. We first put the theory on a cylinder by compactifying one spatial direction on a circle of radius $r\in\mathbb{R}$.  The one-dimensional theory so obtained describes one massless particle in a potential, an infinite tower of massive particles, and their interactions.  We then sew the cylinder into a torus and integrate over all regularised kernels to obtain the partition function as a perturbative expansion in the coupling constant. The  final result presents, at leading order, a particular T-duality.
\subsection{Liouville theory on a cylinder}\label{sec:liouville_cylinder}
Liouville field theory is an exactly solvable 2D CFT model described by the action
\cite{Polyakov:1981rd, Seiberg:1990eb, Nakayama:2004vk}
\begin{equation}
    S=\frac{1}{4\pi}\int_\Sigma d^2\sigma \sqrt{g}\,\left(g^{ab}\partial_{a}X\partial_{b}X+4\pi\mu e^{\beta X}+QRX\right),
\end{equation}
where $\mu$ is the cosmological constant, $\beta$ is the coupling constant of the theory, $Q=\beta/2+2/\beta$, $\Sigma$ is a Riemann surface parametrised by $\sigma_a$, $a=1,2$, with Ricci curvature $R$ and equipped with a Euclidean metric $g$. The field $X$ can be thought of as a map $X:\Sigma\to M$, where $M$ is some target space. 
In this paper we are interested in the case where $\Sigma$ is a cylinder. To parametrise it, we let $\sigma_2\in[0,1]$ and compactify the coordinate $\sigma_1$ on a circle of radius $r$ so that $\sigma_1\in[0,2\pi r]$. We then perform a Kaluza-Klein dimensional reduction via the Ansatz
\begin{equation}
    X(\sigma_1,\sigma_2)=\sum_{n\in \mathbb{Z}}X_n(\sigma_2)e^{ \frac{i n\sigma_1}{r}},
\end{equation}
interpreting $\sigma_2\equiv\tau$ as the worldline coordinate. 
In addition, we impose the reality condition $X_{-n}=X^{\dagger}_n$ and gauge fix the metric to \cite{Abel:2020gdi}
\begin{equation}
    g_{ab}=\begin{pmatrix}1 & A_{\tau} \\ A_{\tau} & g_{\tau\tau} + A^2_{\tau}
    \end{pmatrix}.
\end{equation}
$A=A_\tau d\tau$ can be interpreted as a gauge field which measures how much the cylinder is twisted. For an untwisted cylinder, $A=0$. Diffeomorphism and Weyl invariance can be fully exploited to set $A=0$ and $g_{\tau\tau}=T^2$, where $T$ is the Teichmüller parameter of the cylinder, also known as Schwinger parameter. It has the dimension of a length. 

After dimensional reduction, the free part of the action becomes
\begin{equation}
    \frac{1}{4\pi}\int_\Sigma d^2\sigma \sqrt{g}\, g^{ab}\partial_{a}X\partial_{b}X=\int_{0}^1 d\tau\,\left\{\frac{r}{2T}\dot{X}_0^2+\sum_{n=1}^{+\infty}\left(\frac{r}{T}\lvert\dot{X}_{n}\rvert^2
    +\frac{T}{r}n^2\lvert X_n\rvert^2\right)\right\}.
\end{equation}
The exponential term of the action requires more care. Firstly, it is convenient to make the change of variable $z=e^{i \sigma_1/r}$. Thus, 
\begin{equation}
    \int_0^{2\pi r} d\sigma_1\, e^{\beta X}=e^{\beta X_0}\int_0^{2\pi r}d\sigma_1\,e^{\beta\sum_{n\neq 0}X_ne^{i n\sigma_1/r}}
    =-ir e^{\beta X_0}\oint_\gamma\frac{dz}{z}e^{\beta\sum_{n=1}^{+\infty }X_n z^n}e^{\beta\sum_{n=1}^{+\infty}\frac{X_{-n}}{z^n}},
\end{equation}
$\gamma$ being the unit circle. The second exponential can now be expanded in powers of $1/z$ by means of a Laurent series. We illustrate this by explicitly keeping terms up to order $1/z^3$. We have
\begin{equation}
\begin{gathered}
    \exp\left(\beta X_{-1}\frac{1}{z}+\beta X_{-2}\frac{1}{z^2}+\cdots\right)
    =1+\beta\left(X_{-1}\frac{1}{z}+X_{-2}\frac{1}{z^2}+ X_{-3}\frac{1}{z^3}+\cdots\right)\\
    +\frac{\beta^2}{2}\left(X_{-1}\frac{1}{z}+X_{-2}\frac{1}{z^2}+ X_{-3}\frac{1}{z^3}+\cdots\right)^2
    +\frac{\beta^3}{3!}\left(X_{-1}\frac{1}{z}+X_{-2}\frac{1}{z^2}+ X_{-3}\frac{1}{z^3}+\cdots\right)^3+\cdots\\
    =a_0+\frac{a_1}{z}+\frac{a_2}{z^2}+\frac{a_3}{z^3}+\cdots,
\end{gathered}
\end{equation}
where 
\begin{equation}\label{eq:coefficients_ai}
\begin{gathered}
    a_0=1,\\
    a_1=\beta X_{-1},\\
    a_2=\beta X_{-2}+\frac{1}{2}\beta^2X^2_{-1},\\
    a_3=\beta X_{-3}+\beta^2X_{-1}X_{-2}+\frac{1}{6}\beta^3X^3_{-1}.
\end{gathered}
\end{equation}
Hence,
\begin{equation}
\begin{gathered}\label{eq:applying_cauchy}
    \int_0^{2\pi r} d\sigma_1\, e^{\beta X}=-ire^{\beta X_0}\sum_{n=0}^{+\infty} a_n\oint_{\gamma}dz \frac{f(z)}{z^{n+1}}=(2\pi r) e^{\beta X_0}\sum_{n=0}^{+\infty}\frac{a_n}{n!}f^{(n)}(z)\rvert_{z=0},
\end{gathered}
\end{equation}
where $f(z)$ is the holomorphic function 
\begin{equation}
    f(z)=\exp\left(\beta\sum_{n=1}^{+\infty}X_n z^n\right).
\end{equation}
In going from the first to the last last line of \eqref{eq:applying_cauchy} we have applied the Cauchy integral formula. 

The upshot is that the dimensional reduction of the exponential factor involves an infinite number of terms, which can grouped by powers of $\beta$. Plugging the coefficients \eqref{eq:coefficients_ai} into \eqref{eq:applying_cauchy} we find
\begin{equation}
\begin{gathered}
    \int_0^{2\pi r} d\sigma_1\, e^{\beta X}=(2\pi r)e^{\beta X_0}\bigg\{1+\beta^2\big(X_1X_{-1}+X_2X_{-2}+X_{3}X_{-3}+\cdots\big)+\beta^3\Big(\frac{1}{2}X^2_1X_{-2}\\+\frac{1}{2}X^2_{-1}X_2+X_1X_2X_{-3}+X_{-1}X_{-2}X_3+\cdots\Big)+\mathcal{O}(\beta^4)\bigg\}.
\end{gathered}
\end{equation}
We now make the assumption of small coupling. To be precise, we truncate the series up to order $\mathcal{O}(\beta^2)$:
\begin{equation}\label{eq:integral_exponential}
    \int_0^{2\pi r} d\sigma_1\, e^{\beta X}=(2\pi r)e^{\beta X_0}\left(1+\beta^2\sum_{n=1}^{+\infty}\lvert X_n\rvert^2+\mathcal{O}(\beta^3)\right).
\end{equation}
The preliminary form of the Liouville action after dimensional reduction is 
\begin{equation}\label{eq:action_compactified}
\begin{gathered}
    S=\int_0^1 d\tau \left(\frac{r}{2 T}\dot{X}_0^2+2\pi \mu r T e^{\beta X_0}\right)
    +\sum_{n=1}^{+\infty}\int_{0}^1 d\tau\, \left(\frac{r}{T}\lvert \dot{X}_n\rvert^2 +\frac{T}{r}n^2\lvert X_n\rvert^2\right)\\
    +2\pi \mu r T\beta^2\sum_{n=1}^{+\infty}\int_0^1 d\tau\, e^{\beta X_0}\lvert X_n\rvert^2.
\end{gathered}
\end{equation}
The last line of \eqref{eq:action_compactified} can be again expanded for small $\beta$, giving rise to an additional contribution to the Kaluza-Klein masses plus interaction terms between $X_0$ and $X_n$. Moreover, without loss of generality we can shift the $X_0$ mode as
\begin{equation}\label{eq:shift_X0}
    X_0\to X_0 - \frac{1}{\beta}\ln\left(2\pi r T\right).
\end{equation}
At the end we get a one-dimensional theory equipped with the constant einbein $u=\left(T/r\right)^2$ and consisting of three pieces:
\begin{equation}
    S=S(X_0)+S\left(\{X_n\}\right)+S\left(\{X_0,X_n\}\right).
\end{equation}
$S(X_0)$ is the action for a massless particle $X_0$ moving in a potential $V(X_0)=-\frac{r}{T}\mu e^{\beta X_0}$,
\begin{equation}
    S(X_0)=\int_0^1 d\tau \sqrt{u}\, \left(\frac{1}{2}u^{-1}\dot{X}_0^2-V(X_0)\right)
    =\int_0^1 d\tau \left(\frac{r}{2 T}\dot{X}_0^2+\mu e^{\beta X_0}\right).
\end{equation}
$S\left(\{X_n\}\right)$ contains an infinite tower of massive particles (the Kaluza-Klein modes), similar to what happens when compactifying the worldsheet of the Polyakov action in bosonic string theory \cite{Abel:2020gdi}. The masses are
\begin{equation}
    m^2_n=n^2+\frac{r\mu\beta^2}{T}, \hspace{4 mm} n\geq 1
\end{equation}
and
\begin{equation}\label{Xn_liouville}
\begin{gathered}
    S\left(\{X_n\}\right)=\sum_{n=1}^{+\infty}\int_0^1 d\tau \sqrt{u}\, \left(u^{-1} \lvert \dot{X}_n\rvert^2 +m^2_n\lvert X_n\rvert^2\right)\\ =\sum_{n=1}^{+\infty}\int_{0}^1 d\tau\, \bigg\{\frac{r}{T}\lvert \dot{X}_n\rvert^2 +\left(\frac{T}{r}n^2+\mu\beta^2\right)\lvert X_n\rvert^2\bigg\}.
\end{gathered}
\end{equation}
Finally, $S\left(\{X_0,X_n\}\right)$ involves interaction terms between the zero mode $X_0$ and the KK modes, describing a variety of (possibly virtual) processes:
\begin{equation}\label{eq:vertex}
\begin{gathered}
    S\left(\{X_0,X_n\}\right)=\frac{r}{T}\mu\beta^3\int_0^1 d\tau \sqrt{u}\, X_0 \lvert X_n\rvert^2 +\mathcal{O}(\beta^4)  =  \mu \beta^3\sum_{n=1}^{+\infty}\int_0^1 d\tau\, X_0 \lvert X_n\rvert^2+\mathcal{O}(\beta^4).
\end{gathered}
\end{equation}
The first non trivial interaction is a cubic vertex, describing the decay of $X_0$ into $X_n$ and $X^{\dagger}_n$ (or equivalently the annihilation of $X_n$ and $X^{\dagger}_n$ to form $X_0$), as shown in Fig.~\ref{fig1}. Keeping higher order terms in the expansion of $\exp{\beta X_0}$ allows for the existence of new interactions, which are however increasingly suppressed. 
\begin{figure}
\begin{center}
\begin{tikzpicture}
  \begin{feynman}
    \vertex (a) {\(X_0\)};
    \vertex [right=of a] (b);
    \vertex [above right=of b] (f1) {\(X_{n}\)};
    \vertex [below right=of b] (c) {\(X^{\dagger}_n\)};
    \diagram* {
      (a) -- [scalar] (b) -- [plain] (f1),  
      (b) -- [plain] (c), 
    }; $\hspace{30 mm}\sim \mu \beta^3$
  \end{feynman}
\end{tikzpicture}  
\end{center}
\caption{\label{fig1} A possible Feynman diagram for the interaction \eqref{eq:vertex}.}
\end{figure}
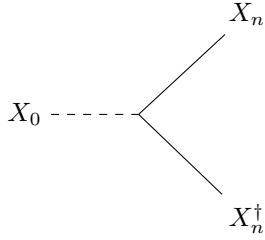
\subsection{The propagator of \texorpdfstring{$X_0$}{TEXT}}
In this subsection we compute the explicit expression of the propagator for $X_0$. The starting point is
\begin{equation}
    K_0(x_0,y_0)=\int \mathcal{D}X_0 \,e^{-S(X_0)},
\end{equation}
with boundary conditions $X_0(0)=x_0$, $X_0(1)=y_0$. It is not necessary to expand the exponential term of $S(X_0)$ because the path integral is exactly solvable. Indeed, path integrals of this kind were notably studied in \cite{Garay:1990re} in the context of quantum cosmology. We introduce two auxiliary fields $u$ and $v$ via the change of variables
\begin{equation}\label{eq:change_variables}
    u=\frac{1}{\beta}\cosh\left(\beta X_0\right), \hspace{4 mm} v=\frac{1}{\beta}\sinh\left(\beta X_0 \right).
\end{equation}
The action $S(X_0)$ simplifies to 
\begin{equation}\label{eq:actionX0}
S(X_0)=\int_0^1 d\tau\, \bigg\{\frac{r}{2 T}(-\dot{u}^2+\dot{v}^2)
+\mu \beta (u+v)\bigg\}.
\end{equation}
The classical equations of motions are 
\begin{equation}\label{eq:classical_eom}
    \ddot{u}=-\frac{\mu T\beta}{r}, \hspace{5 mm} \ddot{v}=\frac{\mu T\beta}{r}.
\end{equation}
Defining the boundary conditions
\begin{equation}
    u(0)=u', \hspace{2mm} u(1)=u'', \hspace{2mm}  v(0)=v',\hspace{2mm}  v(1)=v'',
\end{equation}
the solutions of \eqref{eq:classical_eom} are
\begin{equation}
\begin{split}
\Bar{u}(\tau)=-\frac{\mu T\beta}{2r}\tau^2+\left(\frac{\mu T\beta}{2r}+u''-u'\right)\tau+u',\\
\Bar{v}(\tau)=\frac{\mu T\beta}{2r}\tau^2+\left(-\frac{\mu T\beta}{2r}+v''-v'\right)\tau+v'.
\end{split}
\end{equation}
We now expand $u$ and $v$ around the classical solutions as
\begin{equation}\label{eq:expansions_uv}
\begin{split}
    u(\tau)=\Bar{u}(\tau)+\omega(\tau),\\
    v(\tau)=\Bar{v}(\tau)+\rho(\tau).
\end{split}
\end{equation}
and redefine the path integral measure as 
\begin{equation}
    \mathcal{N}\mathcal{D}\rho\mathcal{D}\omega,
\end{equation}
$\mathcal{N}$ being a normalization factor.
With the expansions \eqref{eq:expansions_uv} the action \eqref{eq:actionX0} becomes
\begin{equation}
    S(X_0)=S_{cl}+\frac{1}{2}\int_0^1 d\tau\, \frac{r}{T}\left(\dot{\rho}^2-\dot{\omega}^2\right),
\end{equation}
where
\begin{equation}\label{eq:action_classical}
\begin{gathered}
    S_{cl}=\int_0^1d\tau\,\left\{\frac{r}{2T}\left(\dot{\Bar{v}}^2-\dot{\Bar{u}}^2\right)+\mu\beta\left(\bar{u}+\bar{v}\right)\right\}\\
    =\frac{r}{2T} \left[(v'-v'')^2-(u'-u'')^2\right]
    +\frac{\mu\beta}{2} \left(u'+u''+v'+v''\right).
\end{gathered}
\end{equation}
The propagator is then
\begin{equation}
\begin{gathered}
    K_0(u',v',u'',v'')=\mathcal{N}\int \mathcal{D}\rho\mathcal{D}\omega\, e^{-S(X_0)}\\
    =\mathcal{N}e^{-S_{cl}}\int \mathcal{D}\rho\mathcal{D}\omega\,\exp{\left[-\frac{1}{2}\int_0^1 d\tau\, \frac{r}{T}\left(\dot{\rho}^2-\dot{\omega}^2\right)\right]}.
\end{gathered}
\end{equation}
The path integral over $\rho$ is identical to that for a free non-relativistic particle. Therefore
\begin{equation}
     K_0(u',v',u'',v'')=\sqrt{\frac{r}{2\pi T}}\mathcal{N}Ie^{-S_{cl}},
\end{equation}
where 
\begin{equation}
    I=\int \mathcal{D}\omega\, \exp\left[\frac{1}{2}\int_0^1 d\tau\,\frac{r}{T\beta^2}\dot{\omega}^2\right]
\end{equation}
is a clearly divergent integral. If we fix the normalization factor $\mathcal{N}$ by requiring that the Liouville propagator should recover the usual free non-relativistic particle propagator in the limits $\beta\to0$ and $\mu\to 0$, then we can absorb the divergence of $I$ (plus irrelevant constant factors like $\pi$) into $\mathcal{N}$:
\begin{equation}
    \mathcal{N}\sim I^{-1}.
\end{equation}
We now reinstate the $X_0$ dependence. Using \eqref{eq:change_variables} and recalling that $X(0)=x_0$, $X_0(1)=y_0$, we have
\begin{equation}
\begin{split}
    u'=\frac{1}{\beta}\cosh\left(\beta x_0\right), \hspace{3mm} u''=\frac{1}{\beta}\cosh\left(\beta y_0\right),\\
    v'=\frac{1}{\beta}\sinh\left(\beta x_0\right), \hspace{3mm} v''=\frac{1}{\beta}\sinh\left(\beta y_0\right).
\end{split}
\end{equation}
The classical action \eqref{eq:action_classical} becomes
\begin{equation}
    S_{cl}=\frac{r}{T\beta^2}\left\{\cosh\left[\beta(x_0-y_0)\right]-1\right\}
    +\frac{\mu}{2}\left(e^{\beta x_0}+e^{\beta y_0}\right)
\end{equation}
and the propagator is
\begin{equation}
  K_0\left(x_0,y_0\right) =\left(\frac{r}{T}\right)^{1/2}e^{-S_{cl}}.
\end{equation}
We can check that in the limits $\mu\to0$ and $\beta\to0$ the form of the propagator reduces to that of a free non-relativistic particle:
\begin{equation}
    K_0\left(x_0,y_0\right)\sim \left(\frac{r}{T}\right)^{1/2}e^{-\frac{r}{2T}(x_0-y_0)^2}.
\end{equation}
\subsection{Sewing the cylinder into a torus}
The next thing to compute is the amplitude for a single massive state $X_n$ to go from an initial boundary condition $x_n$ to a final $y_n$. This is given by the Mehler kernel:
\begin{equation}
K_n\left(x_n,y_n\right)=\int_{X_n(0)=x_n}^{X_n(1)=y_n} \mathcal{D}X_0\, e^{-S(X_n)}
=\frac{\omega_n r}{\pi T\sinh\omega_n}e^{-\frac{\omega_n r}{T}\frac{(\lvert x_n\rvert^2+\lvert y_n\rvert^2)\cosh\omega_n - 2\text{Re}(x_n\cdot y_n)}{\sinh\omega_n}},
\end{equation}
where
\begin{equation}
    \omega_n=\frac{T}{r}\sqrt{n^2+\frac{\mu r\beta^2}{T}}.
\end{equation}
We are now ready to compute the one-loop partition function by gluing together the far ends of the cylinder and by integrating over all possible kernels. A remark is in order. We set so far $A=0$. However, we have to take into account the ability of the fields to undergo a gauge transformation parametrised by a non-vanishing $A$ as they move around the loop. Hence, we identify $x_n \sim x_n e^{\frac{i nA}{r}}$.
The one-loop partition function is
\begin{equation}\label{eq:part_func_liouville}
\begin{gathered}
    Z(A,T)=\lim_{N\to+\infty}\prod_{n=1}^N\int_{-\infty}^{+\infty}dx_ndx^{\dagger}_n\, K_n\left(x_n,x_ne^{\frac{i nA}{r}}\right)\int_{-\infty}^{+\infty}dx_0\, K_0(x_0,x_0)\\
    =\left(\frac{r}{T}\right)^{1/2}\int_{-\infty}^{+\infty}dx_0\, e^{-\mu e^{\beta x_0}}\lim_{N\to +\infty}\prod_{n=1}^N\int_{-\infty}^{+\infty}dx_ndx^{\dagger}_n\,\frac{\omega_n r}{\pi T\sinh\omega_n} \\
    \times\exp\left\{-\frac{4\omega_n r}{T\sinh \omega_n}\lvert x_n \rvert^2\bigg\lvert\sin\left(\frac{nA}{2r}+\frac{i}{2}\omega_n\right) \bigg\rvert^2\right\}.
\end{gathered}
\end{equation}
The integral over $\lvert x_n\rvert$ is a standard Gaussian integral, while the other one over $x_0$ can be regularised as
\begin{equation}
    \int_{-\infty}^{+\infty}dx_0\, e^{-\mu e^{\beta x_0}}=-\frac{1}{\beta}\left(\gamma_E+\ln\mu \right),
\end{equation}
where $\gamma_E$ is the Euler-Mascheroni constant. 
Dropping irrelevant constant factors, the partition function is then
\begin{equation}
    Z(A,T)=-\frac{1}{\beta}\left(\frac{r}{T}\right)^{1/2}\ln\mu\lim_{N\to +\infty}\prod_{n=1}^{N}\frac{1}{\big\lvert\sin\left(\frac{nA}{2r}+\frac{i}{2}\omega_n\right) \big\rvert^2}.
\end{equation}
One can expand $\omega_n$ to any given order in $\beta$. At lowest order, 
\begin{equation}
\begin{gathered}
    \frac{1}{\big\lvert\sin\left(\frac{nA}{2r}+\frac{i}{2}\omega_n\right) \big\rvert^2}=\frac{1}{\big\lvert\sin\left(\frac{nA}{2r}+\frac{inT}{2r}\right) \big\rvert^2}+\mathcal{O}(\beta^2).
\end{gathered}
\end{equation}
Therefore, the infinite product can be easily computed by means of zeta function regularisation \cite{Abel:2020gdi}:
\begin{equation}
    \prod_{n=1}^{+\infty} \sin(2\pi n \tau)\stackrel{\zeta}{=}\sqrt{2}\eta(2\tau),
\end{equation}
where $\eta$ is the Dedekind eta function. 
The $T$ and $A$ parameters could now be rescaled as $T\to2\pi T$, $A\to2\pi A$, so to eliminate the factor $2\pi$ in the $\eta$ function and absorb it by an overall normalisation. However, we do not absorb $r$ into $T$ and $A$ as they represent physically distinct quantities: $r$ is the radius of the cylinder, $T$ is the Schwinger parameter. We want to continue keeping this distinction clear. 

The final result for the one-loop partition function at leading order in $\beta$ is then
\begin{equation}\label{eq:partitionfunc_liouville}
    Z(A,T,r)=-\frac{1}{\beta}\left(\frac{r}{T}\right)^{1/2}\frac{1}{\big\lvert\eta\left(\frac{A+iT}{r}\right) \big\rvert^2}\ln\mu+\mathcal{O}(\beta^2).
\end{equation}
We notice that the Schwinger parameter $T$ gets promoted to the full complex Teichm\"uller parameter $\frac{1}{r}(T+iA)$ of the torus. Strictly speaking, it is the additional presence of the gauge field $A$ that really turns \eqref{eq:part_func_liouville} into a one-loop toroidal diagram.

We would like now to comment about our new expression. First of all, the form of \eqref{eq:partitionfunc_liouville} is similar to the analogous expression obtained from the compactification of bosonic string theory on a cylinder \cite{Abel:2020gdi}. The difference is that here we have the additional $\beta$ and $\mu$ parameters of Liouville theory. The limits $\beta\to 0$ and $\mu\to 0$ seem to make $Z$ divergent, this is however not the case. In fact, in those limits $K_0$ becomes independent of $x_0$ and no integration over it is required. Hence, one correctly recovers the result of \cite{Abel:2020gdi} for $V(x)=0$.
Furthermore, using the following property of the Dedekind eta function
\begin{equation}
    \eta\left(\frac{i}{x}\right)=\sqrt{x}\eta\left(ix\right),
\end{equation}
it is easy to see that the partition function for the case $A=0$ is invariant under the transformation 
\begin{equation}\label{eq:Tduality}
    r\to\frac{T^2}{r},
\end{equation}
which is an example of T-duality. The radius $r=T$ is a fixed point of this transformation. It means that the compactified theory looks the same regardless whether we consider it at large or small radius of the internal circle. The partition function at the critical radius is independent of $T$:
\begin{equation}\label{eq:Z_fixedpoint}
   Z(A=0,T,r=T) = -\frac{1}{\beta}\frac{1}{\lvert\eta\left(i\right) \rvert^2}\ln\mu= -\frac{1}{\beta}\frac{4\pi^{\frac{3}{2}}}{\Gamma\left(\frac{1}{4}\right)^2}\ln\mu\sim -\frac{1}{\beta}\ln\mu.
\end{equation}
\subsection{Scattering amplitudes}\label{sec:amplitudes}
In order to construct amplitudes, we first need the worldline Green's function from the action \eqref{Xn_liouville}. It can be read off the two-point correlator
\begin{equation}
    \langle X^{\dagger} (v_1)X(v_2) \rangle = \delta_{nm} G_n\left(v_{12},T\right)
\end{equation}
where $v_{12}=\lvert v_1-v_2\rvert$ and the Green's function $G_n$ is the inverse of the operator
\begin{equation}
    L_n=-\frac{r}{T}\partial^2_{\tau}+\frac{T}{r}n^2+\mu\beta^2.
\end{equation}
Following \cite{Abel:2020gdi}, we restrict ourselves to line segments and circles and gauge fix $A=0$.
Over the real line, the Green's function is
\begin{equation}
    G_n\left(v_{12},T\right)=\frac{1}{4\pi \lvert s\rvert}e^{-2\pi v_{12}\lvert s\rvert\frac{T}{r}}
\end{equation}
while over a circle parametrised by $v\in[0,1]$ it is
\begin{equation}
    G^{\circ}_n\left(v_{12},T\right)=\frac{1}{4\pi \vert s\rvert}\sum_{k\in\mathcal{Z}}e^{-2\pi\lvert s\rvert \lvert k-v_{12}\rvert\frac{T}{r}},
\end{equation}
where $s^2=n^2+\frac{\mu\beta^2 r}{T}$.

The second ingredient that we need is the worldline vertex operator. Its gauge invariant expression is
\begin{equation}
    V=\int du\, \prod_m V_{u,m}=\int du\, \prod_m e^{ip\cdot e^{2\pi i m u X_m}}.
\end{equation}
A general one-loop scattering amplitude has then the form
\begin{equation}
    \mathcal{A}_m=\prod_{n_1,\cdots n_m \in \mathbb{Z}}\int dT\int u_1\cdots \int u_m \int v_1\cdots\int v_m\, \langle V^{\dagger}_{u_1,n_1}\left(k_1,v_1\right)V_{u_2,n_2}\left(k_2,v_2\right)\cdots\rangle Z(T,r).
\end{equation}
where $k_1,..,k_m$ refer to the momenta of the external states.

As an example, the amplitude involving two tachyon vertex operators on a circle with momenta $p$ and $q$ is
\begin{equation}
\begin{gathered}
    \mathcal{A}_2=\prod_{n_1,n_2\in\mathbb{Z}}\int dT\int du_1\int du_2\int dv\,\langle V^{\dagger}_{u_1,n_1}(p,0)V_{u_2,n_2}(q,v)\rangle Z(T,r)\\
    =\int dT\int du_1\int du_2\int dv\,e^{-pq \sum_{n\in \mathbb{Z}}e^{2\pi i n(u_1+u_2)G^{\circ}_n(v,T)}}Z\left(T,r\right).
\end{gathered}
\end{equation}
Another interesting example is the vacuum to vacuum amplitude. In this case, the only integration required is over $T$ in the interval $I=(0,+\infty)$\footnote{In principle one integrates over both $T$ and $A$, but we are focusing here on the case $A=0$.}. The measure receives an additional factor of $1/T$ coming from the sewing procedure, as the same circle arises from $\sim 2\pi T$ ways of sewing the dimensionally reduced line segment together \cite{Abel:2020gdi}. Moreover, let us apply the expression \eqref{eq:Z_fixedpoint} for $Z$ evaluated at the fixed point of the T-duality. We have
\begin{equation}\label{eq:our_amplitude}
    \mathcal{A}_0=-\frac{1}{\beta}\ln\mu \int_a^b \frac{dT}{T}=-\frac{1}{\beta}\ln\mu \left(\ln b-\ln a\right),
\end{equation}
The parameters $a$ and $b$ define the region of integration. Since $T=r$, they depend on some length scale $r'$. Moreover, they must satisfy some requirements. Firstly, they have to behave as $a\to 0$ and $b\to+\infty$ in some limiting regime, say $r'\to0$. Secondly, we require the ratio $b/a$ to be finite and symmetric under \eqref{eq:Tduality}. This means that 
\begin{equation}
    b=af\left(r'+\frac{T'}{r'}\right)
\end{equation}
for some function $f$. Here, $T'$ is a length scale that plays an analogous role as $T$. Without loss of generality we choose to fix $T'=1$.
If we assume a small compact radius $r'$, then for $r'\to 0$ the first requirement implies that $f$ must go to infinity faster than the way $a$ goes to zero.  This, along with dimensional analysis, fixes $f$ to an exponential form, hence
\begin{equation}
b=ae^{r'+\frac{1}{r'}}.
\end{equation}
Let us now perform a \textit{double compactification} \cite{Duff:1987bx}. In other words,  we assume that, in addition to the worldsheet, also the target space is compact. Furthermore, we choose $r'$ to be the radius of the target space. Hence, the amplitude is
\begin{equation}
    \mathcal{A}_0 = -\frac{1}{\beta}\left(r'+\frac{1}{r'}\right)\ln \mu,
\end{equation}
 Our result is identical to the one obtained in \cite{Gross:1990ub, Nakayama:2004vk} for the vacuum to vacuum amplitude (called partition function in these references) of Liouville theory with a compact target space. Our route to the result is advantageous since it does not involve ghosts and avoids various complications of matrix models. What we have done is to dimensionally reduce Liouville theory to a worldline, then compute the leading term of the small $\beta$ expansion of the partition function for $A=0$ and $r=T$, and finally integrate over $T$. 
 
We conclude this section by noticing that one could obviously compute a generic amplitude (i.e. without assuming a compact target space) by starting with the full \eqref{eq:partitionfunc_liouville} and integrating over $T$ and $A$. This would likely involve the use of the Rankin-Selberg technique \cite{Abel:2021tyt}. 
\section{2D quantum field theories as worldline theories}\label{sec:section2}
The analysis of the previous section relied on the particular exponential form of the Liouville potential. The circle compactification was carried at low coupling, and information about the cylinder amplitude was extracted using standard tools. It is then natural to ask which QFTs allow for a similar treatment. Our procedure is not immediately useful for any other potential, since we crucially relied upon the basic fact that $e^{(x+y)}=e^x e^y$. There is one other particular advantage in working on the exponential potential, though: any other Fourier expandable potential can be reduced to it, in the following sense. Consider the field theory
\begin{equation}\label{eq:action_soliton}
    S=\frac{1}{4\pi}\int_\Sigma d\sigma^2\sqrt{g}\, \left(g^{ab}\partial_aX\partial_b X+4\pi V(X)\right),
\end{equation}
where $X$ is a 2D scalar field and 
\begin{equation}
    V(X)=-\frac{2a^2}{\cosh^2{aX}}, \hspace{3mm} a>0.
\end{equation}
is a generalisation of the P\"{o}schl-Teller quantum potential, also known as one-soliton potential \cite{pupasov2005}. The role of the coupling constant is interpreted by $a$. Applying a Fourier transform, the potential is written 
\begin{equation}
    V(X)=-a^2 \int_{-\infty}^{+\infty}dk\, \frac{k}{\sinh\left(\frac{\pi k}{2}\right)}e^{-iakX}.
\end{equation}
Now we carry out the same compactification procedure of Sec.\ref{sec:section1}. Using \eqref{eq:integral_exponential} and \eqref{eq:shift_X0} with $\beta\to-iak$, we find that the action \eqref{eq:action_soliton} splits into three pieces:
\begin{equation}
    S=S(X_0)+S\left(\{X_n\}\right)+S\left(\{X_0,X_n\}\right),
\end{equation}
where
\begin{equation}
   S(X_0)=\int_0^1 d\tau\, \left(\frac{r}{2T}\dot{X}_0-\frac{2a^2}{\cosh^2\left(aX_0\right)}\right),
\end{equation}
\begin{equation}\label{eq:xn_soliton}
    S\left(\{X_n\}\right)=\sum_{n=1}^{+\infty}\int_0^1 d\tau\,  \left\{\frac{r}{T}\lvert \dot{X}_n\rvert^2+\left(\frac{T}{r}n^2+4a^4\right)\lvert X_n\rvert^2\right\},
\end{equation}
\begin{equation}
    S\left(\{X_0,X_n\}\right)=\mathcal{O}(a^5).
\end{equation}
Since \eqref{eq:xn_soliton} and \eqref{Xn_liouville} have the same structure, the amplitude for a single massive state $X_n$ to go from an initial boundary condition $x_n$ to a final $y_n$ is
\begin{equation}
K_n\left(x_n,y_n\right)
=\frac{\omega_n r}{\pi T\sinh\omega_n}e^{-\frac{\omega_n r}{T}\frac{(\lvert x_n\rvert^2+\lvert y_n\rvert^2)\cosh\omega_n - 2\text{Re}(x_n\cdot y_n)}{\sinh\omega_n}},
\end{equation}
where
\begin{equation}
    \omega_n=\frac{T}{r}\sqrt{n^2+4a^4\frac{r}{T}}.
\end{equation} 
The propagator for $X_0$ is \cite{pupasov2005} 
\begin{equation}
    K_0(x_0,y_0)=\sqrt{\frac{r}{4\pi T}}e^{-\frac{1}{4}(x_0-y_0)^2}+\sqrt{\frac{2r}{T}}e^{\frac{a^2T}{2r}}\frac{\text{erf}_{+}\left(a\sqrt{\frac{T}{2r}}\right)+\text{erf}_{-}\left(a\sqrt{\frac{T}{2r}}\right)}{4\cosh\left(a\sqrt{\frac{T}{2r}x_0}\right)\cosh\left(a\sqrt{\frac{T}{2r}}y_0\right)},
\end{equation}
with
\begin{equation}
    \text{erf}_{\pm}(a)=\text{erf}\left(a\pm\frac{i}{2}(x_0-y_0)\right).
\end{equation}
The one-loop partition function is 
\begin{equation}
    Z(A,T,r)=\lim_{N\to+\infty}\prod_{n=1}^{+\infty}\int_{-\infty}^{+\infty}dx_nd^{\dagger}x_n\, K_n\left(x_n,x_ne^{\frac{inA}{r}}\right)\int_{-\infty}^{+\infty}dx_0\, K_0\left(x_0,x_0\right).
\end{equation}
A remark is in order. Without the $a$ dependent part of $K_0$, there would be no $x_0$ in $Z$, consequently we would not have to integrate over it. Hence, the integral over $x_0$ is intended to be over the $a$ dependent part only, thereby avoiding an infinite term. 
Following the calculations of Section \ref{sec:section1}, recalling that $\text{erf}(x)=\frac{2x}{\sqrt{\pi}}+\mathcal{O}(x^2)$ and neglecting irrelevant constant factors, at first order in $a$ we find
\begin{equation}\label{eq:partfunc_soliton}
    Z(A,T,r)=\left(\frac{r}{T}\right)^{\frac{1}{2}}\frac{1+a}{\lvert \eta\left(\frac{A+iT}{r}\right) \rvert^2}+\mathcal{O}(a^4).
\end{equation}
The structure of the partition function is similar to that found for Liouville theory. As such, for $A=0$ it presents the same T-duality $r\to T^2/r$. The radius $r=T$ is a fixed point of the transformation and with this choice the partition function simplifies and becomes independent of $T$. The limit $a\to 0$ recovers the result of $V(x)=0$ obtained in \cite{Abel:2020gdi}.
The procedure that we carried out can be applied to any two-dimensional field theory, provided that the potential can be written in terms of its Fourier transform. If this is the case, then the exponential term can be compactified according to \eqref{eq:integral_exponential}. The upshot of this procedure is that one can relate any 2D field theory on a cylinder, regardless of the potential function defining it, to an ordinary quantum mechanical system, at least at small coupling. The universal behavior of the torus partition function of such theories is discussed in the following section.
\section{Universality of the torus partition function at small coupling}\label{sec:section3}
In this section we prove a general result involving cylinder quantum field theories at small coupling. The proof closely follows the steps illustrated in the previous sections. 
\begin{theorem}
Let $\mathcal{F}[\Sigma;\epsilon]$ be a two-dimensional quantum field theory for a scalar field $X$, with coupling constant $\epsilon$ and defined on a cylinder $\Sigma$ of radius $r$. Let $T$ be the Schwinger parameter of the cylinder, $A$ a gauge field describing its twist and $V_\epsilon$ a potential such that its Fourier transform exists. Moreover, let $\mathcal{F}[\Sigma;\epsilon]$ have the Lagrangian description 
\begin{equation}
S=\frac{1}{4\pi}\int_{\Sigma}d^2\sigma \sqrt{g}\,\left(g^{ab}\partial_a X\partial_b X+4\pi V_\epsilon(X)\right).
\end{equation}
Then the one-loop partition function obtained from the corresponding worldline theory by sewing the cylinder into a torus is, at leading order in the coupling,
\begin{equation}
    Z(A,T,r)=\left(\frac{r}{T}\right)^{\frac{1}{2}}\frac{1+\epsilon^n}{\lvert \eta\left(\frac{A+iT}{r}\right)\rvert^2},
\end{equation}
for some power $n$ of the coupling. 
In particular, $Z(A=0,T,r)$ is invariant under the symmetry transformation
\begin{equation}
    r\to\frac{T^2}{r}
\end{equation}
\end{theorem}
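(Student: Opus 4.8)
The plan is to follow the template established for Liouville theory in Section~\ref{sec:section1} and for the P\"oschl-Teller potential in Section~\ref{sec:section2}, replacing the specific potential by an arbitrary $V_\epsilon$ expressed through its Fourier transform. First I would write $V_\epsilon(X)=\int_{-\infty}^{+\infty}dk\,\tilde V_\epsilon(k)\,e^{-ikX}$ and perform the Kaluza-Klein reduction with the Ansatz $X(\sigma_1,\sigma_2)=\sum_{n}X_n(\sigma_2)e^{in\sigma_1/r}$. The kinetic term reduces exactly as before, producing the massless zero mode $X_0$ together with the free tower $\frac{r}{T}|\dot X_n|^2+\frac{T}{r}n^2|X_n|^2$. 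For the potential, each Fourier mode $e^{-ikX}$ is dimensionally reduced by the very same contour and Laurent manipulation of \eqref{eq:applying_cauchy}, now with the replacement $\beta\to -ik$; applying \eqref{eq:integral_exponential} and the shift \eqref{eq:shift_X0} mode by mode in $k$ and reintegrating against $\tilde V_\epsilon(k)$ yields a one-dimensional action that splits into the three pieces $S(X_0)+S(\{X_n\})+S(\{X_0,X_n\})$, with $S(X_0)$ the worldline action of a particle in $V_\epsilon$, $S(\{X_n\})$ a tower whose squared masses are $n^2$ shifted by a coupling-dependent correction of positive power in $\epsilon$, and $S(\{X_0,X_n\})$ an interaction likewise suppressed by powers of $\epsilon$ and dropped at leading order.

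Next I would compute the single-mode amplitude $K_n(x_n,y_n)$. Because $S(\{X_n\})$ is Gaussian with precisely the structure of \eqref{Xn_liouville} and \eqref{eq:xn_soliton}, the kernel is the Mehler kernel with frequency $\omega_n=\frac{T}{r}\sqrt{n^2+c_\epsilon\,r/T}$, where $c_\epsilon$ is the $\epsilon$-dependent mass shift. The zero-mode kernel $K_0(x_0,y_0)$ is whatever the specific $V_\epsilon$ dictates, but its only role in the partition function is to supply, after the regularised integration over the diagonal $x_0$, an overall $\epsilon$-dependent constant whose leading-order expansion is the prefactor $1+\epsilon^n$. I would then sew the cylinder into a torus exactly as in \eqref{eq:part_func_liouville}, identifying $x_n\sim x_n e^{inA/r}$, and perform the Gaussian integrals over each $|x_n|$. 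This collapses the product to $\prod_n|\sin(\tfrac{nA}{2r}+\tfrac{i}{2}\omega_n)|^{-2}$ times the $X_0$ contribution.

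The decisive step is the leading-order behaviour of $\omega_n$: since $c_\epsilon$ is a positive power of the coupling, $\omega_n\to nT/r$ as $\epsilon\to0$ for every $n$, \emph{independently of the potential}. Hence at leading order the infinite product is universal and, by the same zeta-function regularisation used for \eqref{eq:partitionfunc_liouville} together with $\prod_n\sin(2\pi n\tau)\stackrel{\zeta}{=}\sqrt2\,\eta(2\tau)$, reproduces $|\eta(\tfrac{A+iT}{r})|^{-2}$. Assembling the prefactor $(r/T)^{1/2}$, the universal eta-function factor, and the $(1+\epsilon^n)$ from the $X_0$ integral gives the claimed $Z(A,T,r)$. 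Finally, setting $A=0$ and invoking $\eta(i/x)=\sqrt{x}\,\eta(ix)$ shows, exactly as in the derivation of \eqref{eq:Tduality}, that $Z(A=0,T,r)$ is invariant under $r\to T^2/r$.

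I expect the main obstacle to be justifying rigorously that a general $V_\epsilon$ affects only the overall $\epsilon$-dependent prefactor and not the eta-function structure. Concretely, one must argue that the mass corrections $c_\epsilon$ and the vertices $S(\{X_0,X_n\})$ generated by an arbitrary $\tilde V_\epsilon(k)$ are uniformly subleading in $\epsilon$, so that interchanging the $k$-integral with the small-$\epsilon$ expansion is legitimate and the leading term is insensitive to the detailed shape of $V_\epsilon$. Controlling the convergence of the $k$-integral against the Laurent expansion, and ensuring the regularised $x_0$ integral is finite for a generic potential, is where the real work lies; the eta-function and T-duality statements are then immediate corollaries of the universal limit $\omega_n\to nT/r$.
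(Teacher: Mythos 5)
Your proposal follows essentially the same route as the paper's own proof: Fourier-expand $V_\epsilon$, reduce each mode $e^{-ikX}$ via the Liouville formulas with $\beta\to -ik$, obtain the Mehler kernel with an $\epsilon$-shifted frequency, sew into a torus, and extract the universal $\lvert\eta\rvert^{-2}$ factor by zeta regularisation with the $(1+\epsilon^n)$ prefactor coming from the zero-mode integral. The obstacles you flag at the end (uniformity of the subleading corrections in $k$, finiteness of the regularised $x_0$ integral, and the claim that the zero mode only contributes an overall prefactor times $(r/T)^{1/2}$) are precisely the points the paper itself treats heuristically rather than rigorously, so your account is, if anything, more candid about where the argument is incomplete.
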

\begin{proof}
The compactification of the theory on the cylinder, parametrised by $\sigma_1\in[0,2\pi r]$ and $\sigma_2\equiv\tau\in[0,1]$, begins with expanding $X$ as
\begin{equation}
    X(\sigma_1,\sigma_2)=\sum_{n\in \mathbb{Z}}X_n(\sigma_2)e^{ \frac{i n\sigma_1}{r}},
\end{equation}
and gauge fixing the metric to
\begin{equation}
    g_{ab}=\begin{pmatrix}1 & 0\\ 0 & T^2
    \end{pmatrix}.
\end{equation}
The potential $V_{\epsilon}$ can be written as
\begin{equation}
    V_{\epsilon}(X)=\frac{1}{\sqrt{2\pi}}\int_{-\infty}^{+\infty}dk\, \Tilde{V}_{\epsilon}(k) e^{-i \epsilon k X}
\end{equation}
where $\Tilde{V}_{\epsilon}(k)$ is the Fourier transform of $V_{\epsilon}(X)$. 
In Subsection \ref{sec:liouville_cylinder} we showed that (see \eqref{eq:integral_exponential})
\begin{equation}
    \int_0^{2\pi r} d\sigma_1 \, e^{-i\epsilon k X}= (2\pi r) e^{-i\epsilon k X_0}\left(1-\epsilon^2 k^2\sum_{n=1}^{+\infty}\lvert X_n\rvert^2+\mathcal{O}\left(\epsilon^3\right)\right).
\end{equation}
After a shift $X_0\to X_0-\frac{1}{-iak}\ln(2\pi r T)$, $S$ splits into three pieces:
\begin{equation}
    S=S\left(X_0\right)+S\left({X_n}\right)+S\left({X_0,X_n}\right),
\end{equation}
where 
\begin{equation}
\begin{gathered}
    S\left(X_0\right)=\int_0^1 d\tau\, \left(\frac{r}{2T}\dot{X}_0+\frac{1}{\sqrt{2\pi}}\int_{-\infty}^{+\infty}dk\,\Tilde{V}_{\epsilon}(k)e^{-i\epsilon k X_0}\right)\\
    =\int_0^1 d\tau\, \left(\frac{r}{2T}\dot{X}_0+V(X_0)\right)
\end{gathered}
\end{equation}
and
\begin{equation}
    S\left({X_n}\right)=\sum_{n=1}^{+\infty}\int_0^1d\tau\, \left\{\frac{r}{T}\lvert \dot{X_n}\rvert^2+m_n^2\lvert X_n\rvert^2\right\}.
\end{equation}
The masses $m_n$ of the KK modes are 
\begin{equation}
    m_n^2=\frac{T}{r}n^2+m_\epsilon
\end{equation}
where $m_\epsilon$ is some constant depending on the exact nature of $V_{\epsilon}$.
The  term $S\left({X_0,X_n}\right)$ involves interactions between the zero mode and the KK modes and its expression is not needed in this proof. Since $m_n^2$ is a constant term, the propagator for $X_n$ is given by the Mehler kernel and has the structure
\begin{equation}
K_n\left(x_n,y_n\right)
=\frac{\omega_n r}{\pi T\sinh\omega_n}e^{-\frac{\omega_n r}{T}\frac{(\lvert x_n\rvert^2+\lvert y_n\rvert^2)\cosh\omega_n - 2\text{Re}(x_n\cdot y_n)}{\sinh\omega_n}},
\end{equation}
where
\begin{equation}
    \omega_n=\frac{T}{r}\sqrt{n^2+m_\epsilon\frac{r}{T}}.
\end{equation} 
Now, the exact expression of the propagator for $X_0$ depends on the form of the potential. However, in the limit $\epsilon\to 0$ one must recover the free particle propagator. Therefore, we can write
\begin{equation}
    K_0(x_0,y_0)=\left(\frac{r}{4\pi T}\right)^{\frac{1}{2}}e^{-\frac{1}{4}(x_0-y_0)^2}+\Pi_{\epsilon}(x_0,y_0),
\end{equation}
where 
$\Pi_{\epsilon}(x_0,y_0)$ is the $\epsilon$-dependent part of the propagator. Since $T$ has the dimension of a length, the propagator $K_0$ is dimensionless, as it is clear from the first component. Thus, regardless of what the exact expression of $\Pi_{\epsilon}$ is, it must contain the combination $\left(r/T\right)^{1/2}$.  The one-loop partition function, obtained by integrating over all kernels and by taking into account the gauge field $A$, is
\begin{equation}
\begin{gathered}
    Z(A,T,r)=\lim_{N\to+\infty}\prod_{n=1}^{+\infty}\int_{-\infty}^{+\infty}dx_nd^{\dagger}x_n\, K_n\left(x_n,x_ne^{\frac{inA}{r}}\right)\left\{\left(\frac{r}{4\pi T}\right)^{\frac{1}{2}}+\int_{-\infty}^{+\infty}dx_0\, \Pi_{\epsilon}\left(x_0,x_0\right)\right\}.
\end{gathered}
\end{equation}
The integral of $K_n$, at leading order in the coupling, yields a factor of $\lvert \eta\left(\frac{T+iA}{r}\right)\rvert^{-2}$. The integral of $\Pi_{\epsilon}$ gives some contribution depending on the first non-trivial power $n$ of the coupling, from which we can factorise out the factor $\left(r/T\right)^{1/2}$. Hence, neglecting constant factors, the one-loop partition function at leading order, is
\begin{equation}
    Z(A,T,r)=\left(\frac{r}{T}\right)^{\frac{1}{2}}\frac{1+\epsilon^n}{\lvert \eta\left(\frac{A+iT}{r}\right)\rvert^2},
\end{equation}
which is the desired result. 
\end{proof}
As a corollary, the fact that $Z$ is at leading order independent of $V$ implies that amplitudes computed from different theories have a similar structure. 
\section{Summary and Outlook}
In this paper we studied the relationship between the Kaluza-Klein reduction procedure and the worldline formalism. We showed that any 2D scalar field theory compactified on a cylinder and with a Fourier expandable potential $V$ is equivalent, in the small coupling limit, to a 1D theory involving a massless particle satisfying the Schr\"oedinger equation with a potential $V$ and an infinite tower of free massive KK modes. As soon as one moves slightly away from the deep IR region, one can probe interactions between the zero mode and the KK modes. The new interactions are proportional to higher powers of the coupling constant, and hence are increasingly suppressed.  
We illustrated our procedure by first considering the well-studied Liouville field theory. We computed the one-loop partition function for this theory by deforming the cylinder into a torus and by integrating over all regularised kernels. We found that the partition function so obtained is invariant under a certain transformation which maps the radius to the cylinder to its inverse, and rescales it by the square of the Schwinger parameter of the cylinder. We computed the general expressions for one-loop amplitudes, obtained by integrating the partition function with worldline vertex operators. Moreover, we performed a double compactification, by additionally considering a compact target space. In this setting, the vacuum to vacuum amplitude recovers the one obtained in the literature by matrix models techniques. After this proof-of-concept calculation, we moved on to consider a theory with a one-soliton, or P\"{o}schl-Teller, potential and repeated the same steps. The one-loop partition function presents the same form and symmetry property as that for the Liouville case. We showed that this is indeed a universal feature of any cylinder field theory. 

Several possible directions stem from this paper. An idea for a future project would be to compute the amplitudes coming from the partition functions \eqref{eq:part_func_liouville}, \eqref{eq:partfunc_soliton}  by explicitly integrating over $A$ and $T$. The integrals could be performed by expanding the eta function in power series \cite{heim2019} and by applying the Rankin-Selberg technique \cite{Abel:2021tyt}. 
Another interesting idea would be to try to derive higher order partition functions, namely those defined on surfaces of genus greater than one. The problem in this case becomes more complicated. The difficulty at say two-loop order would be that one does not only have to determine the Green's function on the two-loop topology, but it is not even clear if this is the correct approach. In fact, one would have to figure out how to include the presence of the internal vertex operators and it is not obvious whether this is just a question of matching propagators at the vertices. 
The method illustrated in this paper could also be generalised to a generic $D$ dimensional field theory. We expect to address some of these open problems in future works.

\section*{Acknowledgements}
The authors thank the anonymous referees for their valuable comments which improved the quality of the article.

\bibliography{references}
\bibliographystyle{JHEP}

\end{document}